\newcounter{MYtempeqncnt}
\newtheorem{proposition}{Proposition} 
\begin{document}
\title{\LARGE  Sensing-Throughput Tradeoff in Cognitive Radio\\ With Random Arrivals and Departures of Multiple Primary Users}

\author{Hrusikesha~Pradhan,
        Sanket~S.~Kalamkar,~\IEEEmembership{Student~Member,~IEEE,}
        and~Adrish~Banerjee,~\IEEEmembership{Senior~Member,~IEEE}   
     
\thanks{The authors are with the Department of Electrical Engineering, IIT Kanpur, 208016, India. (e-mail:  $\lbrace$kalamkar, adrish$\rbrace$@iitk.ac.in).}
}
\maketitle  

\begin{abstract}
This letter analyzes the sensing-throughput tradeoff for a secondary user (SU) under random arrivals and departures of multiple primary users (PUs). We first study the case where PUs change their status only during SU's sensing period. We then generalize to a case where PUs change status anytime during SU frame, and compare the latter case with the former in terms of the optimal sensing time and SU throughput. We also investigate the effects of PU traffic parameters and the number of PUs on the sensing-throughput tradeoff for SU. Results show that, though the increase in the number of PUs reduces the optimal sensing time for SU, the opportunity to find a vacant PU channel reduces simultaneously, in turn, reducing SU throughput. Finally, we validate the analysis by Monte Carlo simulations.
\end{abstract}
\begin{IEEEkeywords}
Cognitive radio, energy detection, multiple primary users, primary user traffic, sensing-throughput tradeoff.
\end{IEEEkeywords}
\section{Introduction}
\PARstart{S}{pectrum} sensing~\cite{survey} plays a key role in cognitive radio. The energy detection (ED) is one of the most widely used spectrum sensing techniques as it needs no knowledge about the primary user (PU), and has low complexity~\cite{survey}. The frame of a secondary user (SU) consists of a sensing period in which SU performs spectrum sensing to detect whether PU channel is busy or idle; if found idle, SU may transmit over it in the transmission period of the frame. Longer sensing period leads to better sensing performance, but at the cost of decrease in the transmission period for SU, thereby reducing its throughput. This results in the sensing-throughput tradeoff for SU~\cite{liang}. 

Most previous works assume that PU maintains constant occupancy throughout the SU frame. However, PU may arrive or depart anytime during the frame due to high PU traffic or long SU frame duration~\cite{chen1}. In~\cite{chen3, wu, tran,chen5}, authors study the sensing performance of ED against the random arrival and departure of one PU, while~\cite{chen1} and \cite{lu} examine the effect of a single PU traffic on SU's sensing performance as well as throughput. In~\cite{chen2}, authors consider the effect of multiple PUs traffic on the sensing performance of SU when PUs change their status only during the sensing period. However, authors in~\cite{chen2} ignore the effect of PU traffic on SU throughput. It is possible that PUs may change their status anytime during SU frame, including the transmission period. In this case, even if SU finds the channel idle at the end of its sensing period, PUs may arrive in the transmission period interfering SU's transmission. Therefore, it is important to study the effects of random arrivals and departures of multiple PUs on SU's sensing performance as well as throughput. 

In this letter, we first derive closed-form expressions of probabilities of detection and false alarm, and SU throughput for cases when multiple PUs change their status only during the sensing period and anytime during SU frame. Then, we evaluate the joint sensing-throughput performance of SU in terms of sensing-throughput tradeoff, and compare performances of both aforementioned cases. To the best of our knowledge, this letter is the first work to study the effect of multiple PUs traffic on the joint sensing-throughput performance of SU. Finally, we show the effects of  number of PUs and their traffic parameters such as arrival and departure rates on the sensing-throughput tradeoff.   

\begin{figure*}[!t]
\normalsize
\setcounter{MYtempeqncnt}{\value{equation}}
\setcounter{equation}{3}
\begin{equation}
\label{eq:p_h1imk1}
P\left( { \mathcal{H} }_{ 1,i}, m, k \right) = M{ p }_{ Bm }{ \left( 1-{ F }_{ \alpha  }\left( T_{\mathrm{s}} \right)  \right)  }^{ k }\left(\sum_{a_1 = 0}^{L-1}\dotsc\sum_{a_{i-k} = 0}^{L-1}\sum_{d_1 = 0}^{L-1}\dotsc\sum_{d_{m-k} = 0}^{L-1}\prod _{ j = 1 }^{i -k }{ { p }_{ \beta  }\left( { a }_{ j } \right)  }   
\prod _{ l=1 }^{ m-k }{ { p }_{ \alpha  }\left( { d }_{ l } \right)  }\right) { \left( 1-{ F }_{ \beta  }\left( T_{\mathrm{s}} \right)  \right)  }^{ N-\left(i -k \right) -m }.
\end{equation}
\setcounter{equation}{\value{MYtempeqncnt}}
\hrulefill
\end{figure*}
\section{System Model}
Consider a scenario where multiple PUs occupy the same frequency band and a SU attempts to use PU band opportunistically, as in~\cite{chen2}. The ED is used for spectrum sensing, whose output is given as $\Lambda =\sum _{ j=1 }^{ L }{ { { Y }_{ j } }^{ 2 } }$, where ${ Y }_{ j }={ W }_{ j }$ denotes $j$th received sensing sample by SU when the channel is idle, while ${ Y }_{ j }=\sum_{z=1}^{u}{ X }_{ j, z }+{ W }_{ j }$ denotes $j$th received sensing sample when the channel is busy. ${ W }_{ j }$ denotes additive white Gaussian noise (AWGN) with mean zero and variance $\sigma^2$. The term $\sum_{z=1}^{u}{ X }_{ j, z }$ represents signals from $u$ PUs at $j$th sample, with $\mathcal{P}$ being the average power of each PU. Let $\gamma_{\mathrm{p}} = \mathcal{P}/\sigma^2$ be the signal-to-noise ratio (SNR) of one PU at SU. Then, the received SNR at SU when $u$ PUs are present is $u\gamma_{\mathrm{p}}$\footnote{For ease of representation and without compromising the insight into analysis, we assume that each PU has the same SNR $\gamma_{\mathrm{p}}$ as in \cite{chen2}. When each PU has different SNR, the received SNR at SU when $u$ PUs are present becomes $\sum_{z = 1}^{u}\gamma_{\mathrm{p}z}$, where $\gamma_{\mathrm{p}z}$ is the received SNR at SU due to $z$th PU.} \cite{chen2}. The PU remains either in busy ($\alpha$) or idle ($\beta$) state. The holding times of busy and idle periods are random in nature and exponentially distributed~\cite{chen1,chen3,lu} with means $\theta_{\alpha}$ and $\theta_{\beta}$; while ${ F }_{ \alpha  }\left( x \right)$ and ${ F }_{ \beta  }\left( x \right)$ denote their respective cumulative distributions. The probabilities of a PU being busy and idle are ${ p }_{ \mathrm{b} }= \theta_{\alpha}/(\theta_{\alpha} +\theta_{\beta})$ and ${ p }_{ \mathrm{i} }=\theta_{\beta}/(\theta_{\alpha} +\theta_{\beta})$, respectively. The probability mass function (PMF) when a PU changes its status from idle to busy after $j$th sample is\cite{ma}
\begin{equation}
\label{eq:idletobusy}
{ p }_{ \beta  }\left( j \right) ={ F }_{ \beta  }\left( \left( j+1 \right) t_{ \mathrm{s} } \right) -{ F }_{ \beta  }\left( j{ t }_{ \mathrm{s} } \right),
\end{equation}
where $t_\mathrm{s}$ is the sampling interval. Similarly, PMF when a PU changes its status from busy to idle after $j$th sample is
\begin{equation}
\label{eq:busytoidle}
{ p }_{ \alpha  }\left( j \right) ={ F }_{ \alpha  }\left( \left( j+1 \right) t_{ \mathrm{s} } \right) -{ F }_{ \alpha  }\left( j{ t }_{ \mathrm{s} } \right).
\end{equation}

\textit{Notation}: We define $\sum_{t = a}^{b}(\cdot) = 0$, when $b < a$.

\section{Problem formulation and Derivations}
Let $N$ denote the number of PUs. Thus, at the beginning of the sensing period, {\em i.e.}, at the beginning of the SU frame, the channel can have $N+1$ possible states with probabilities,
\begin{equation}
\label{eq:pbmusers}
{ p }_{ Bm }=\left( \begin{matrix} N \\ m \end{matrix} \right) { p }_{ \mathrm{b} }^{ m }{ p }_{ \mathrm{i} }^{ N-m } \hspace*{2mm} \text{and} \hspace*{2mm}{ p }_{ I }=\left( \begin{matrix} N \\ N \end{matrix} \right) { p }_{ \mathrm{i} }^{ N },
\end{equation}
where $p_{Bm}$ is the probability of $m$ PUs occupying the channel at the start of the SU frame ($m = 1, \dotsc, N$), and $p_{I}$ is the probability that all PUs are idle. Each PU changes its status at most once \cite{chen1,chen2, lu} during the entire frame period. Thus, at most $N$ PU status changes may occur in a SU frame.
\subsection{Status Change Only During Sensing}
\label{sec:1}
In this section, we consider a scenario where PUs change their status only during the sensing period.  Let $\left( { \mathcal{H} }_{ 1,i},m,k \right)$ be the hypothesis that $m$ PUs are busy at the start of the sensing period, $i$ PUs are busy at the end of sensing period,  and $k$ out of $m$ PUs remain busy throughout the sensing period. This means $m-k$ PUs leave the channel during the sensing period. Thus, to have $i$ PUs present at the end of the sensing period, we need $i-k$ PUs to arrive during the sensing period. Then, $N-(i-k)-m$ represents the number of PUs that remain idle throughout the sensing period. The probabilities that $k$ PUs remain present and $N-(i-k)-m$ PUs remain absent throughout the sensing period are $\left( 1-{ F }_{ \alpha  }\left( T_{\mathrm{s}} \right)\right)^k$ and $\left( 1-{ F }_{ \beta  }\left( T_{\mathrm{s}} \right)\right)^{N-(i-k)-m}$, respectively, where $T_{\mathrm{s}}$ is the sensing period. Let $a_j$ and $d_l$ denote the sample after which $j$th PU arrives and $l$th PU departs, respectively, and $a_j$, $d_l$  $ = 0, 1, \dotsc, L-1$, where $L$ is the number of sensing samples. Considering all possible combinations under the hypothesis $\left( { \mathcal{H} }_{ 1,i},m,k \right)$, we can write the probability of the hypothesis $\left( { \mathcal{H} }_{ 1,i},m,k \right)$ as \eqref{eq:p_h1imk1}, where $M = \left( \begin{matrix} N \\ N-k \end{matrix} \right) $ when all PUs are busy at the start of the sensing period, because $N-k$ out of $N$ PUs leave the channel and $M$ represents its all possible combinations. When all PUs are idle at the start of the sensing period, $M = \left( \begin{matrix} N \\ i \end{matrix} \right) $, {\em i.e.}, $M$ here represents all possible combinations of $i$ out of $N$ PUs arriving in the channel. In all other cases, $M = 1$ as ${ p }_{ Bm }$ captures all possible combinations. \addtocounter{equation}{1}

\begin{proposition}
The value of $k$ ranges from $\max\left( 0, m+i -N \right) $ to $\min\left( m, i  \right) $.
\label{lemma1}
\end{proposition}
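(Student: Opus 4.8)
The plan is to classify each of the $N$ PUs according to its status at the start and at the end of the sensing period, express the size of each resulting class in terms of $m$, $i$, $k$, and $N$, and then use the fact that every class size must be a non-negative integer. Each such size yields one inequality on $k$, and collecting them gives the stated range.

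First I would observe that, under the hypothesis $\left( \mathcal{H}_{1,i}, m, k \right)$, every PU falls into exactly one of four disjoint classes: (i) busy at the start and busy at the end, i.e., the $k$ PUs that remain present; (ii) busy at the start and idle at the end, i.e., the departing PUs; (iii) idle at the start and busy at the end, i.e., the arriving PUs; and (iv) idle at the start and idle at the end. By the definitions given before the proposition, class (i) has size $k$, class (ii) has size $m-k$ (since $m$ PUs start busy and $k$ of them stay busy), class (iii) has size $i-k$ (since $i$ PUs are busy at the end and $k$ of them were already busy throughout), and class (iv) has size $N - m - \left( i - k \right)$. A quick check shows that these four sizes sum to $N$, confirming that the classes partition the set of PUs; this also matches the exponents and product ranges appearing in~\eqref{eq:p_h1imk1}.

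Next I would impose non-negativity of each class size. Classes (i) and (ii) give $k \ge 0$ and $k \le m$, while classes (iii) and (iv) give $k \le i$ and $k \ge m + i - N$, respectively. Combining the two lower bounds yields $k \ge \max\left( 0, m + i - N \right)$, and combining the two upper bounds yields $k \le \min\left( m, i \right)$, which is precisely the claimed range.

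I expect the only real care to be in reading off the bound from class (iii): one must note that the $k$ PUs remaining busy throughout form a subset of the $i$ PUs busy at the end, so the remaining $i - k$ end-busy PUs can only have arrived during sensing, forcing $i - k \ge 0$. The other three inequalities are immediate once the four-way partition is fixed, so the argument is essentially a counting exercise rather than an analytic one.
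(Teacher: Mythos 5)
Your proof is correct and rests on the same counting argument as the paper's: the four classes you enumerate (stay-busy, depart, arrive, stay-idle, with sizes $k$, $m-k$, $i-k$, and $N-m-(i-k)$) are exactly the quantities the paper reasons about, and your four non-negativity constraints reproduce the paper's two upper bounds (its case analysis giving $k \le \min(m,i)$) and two lower bounds (its maximum-arrivals argument $i-k \le N-m$ together with $k \ge 0$). Your presentation is somewhat more systematic in deriving all four inequalities uniformly from one partition, but the mathematical content is identical.
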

\begin{proof}
To find the maximum value that $k$ may take, we have to consider following two cases:
\begin{itemize}
\item[I)]$i > m$: Here, the maximum value that $k$ may take is $m$.
\item[II)]$i \leq m$: Here, the maximum value that $k$ may take is $i$.
\end{itemize}
Thus, combining both cases, we can infer that the maximum value that $k$ may take is $\min(m,  i)$. 

The minimum value of $k$ corresponds to the maximum number of arrivals of PUs, which can be represented by the equality $i - k =  N - m$. This gives the minimum value of $k = m + i -N$. However, the minimum value of $k$ should be non-negative. Thus, it can be given as $\max(0,  m + i -N)$.
\end{proof}
Taking an example, assume number of PUs to be 10, {\em i.e.}, $N = 10$. Suppose 8 PUs are present at the beginning of sensing period, {\em i.e.}, $m = 8$, and 3 PUs are present at the end of sensing period, {\em i.e.}, $i$ = 3. Then, at most 3 out of 8 PUs may stay busy during the whole sensing period, {\em i.e.}, $\min(m, i)$. Since maximum 2 new PUs can arrive, there has to be minimum 1 out of 8 PUs occupying the channel throughout the sensing period, {\em i.e.}, $\max(0, m+i-N)$, so that when 7 PUs depart, 3 PUs remain present at the end of the sensing period.

Now, let $\left( { \mathcal{H} }_{ 1,i},m \right)$ denote the hypothesis that $i$ PUs are busy at the end of the sensing period and $m$ users are busy at the start of the sensing period. Summing over all possible values of $k$, we can write the probability of $\left( { \mathcal{H} }_{ 1,i},m \right)$ as
 \begin{equation}
\label{eq:p_H1im1}
P\left( { \mathcal{H} }_{ 1,i},m \right) =\sum _{ k=\max\left( 0, m+i -N \right)  }^{ \min\left( m, i  \right)  }{ P\left( { \mathcal{H} }_{ 1,i }, m, k \right)  }.
\end{equation} 
The probability of hypothesis $\mathcal{H}_{1,i}$ that $i$ PUs occupying the channel at the end of the sensing period, given $m$ = 0, 1, $\dotsc, N$ PUs are present at the start of the sensing period is 
\begin{equation}
P\left( { \mathcal{H} }_{ 1,i } \right) =\sum _{ m=0 }^{ N }{ P\left( { \mathcal{H} }_{ 1,i },m \right)  }.
\label{eq:H_1ii}
\end{equation}
Let ${{\mathcal{H}}_1}$ denote the hypothesis that the channel is occupied at the end of the sensing period, {\em i.e.}, $1 \leq i \leq N$. Then, the probability of hypothesis ${{\mathcal{H}}_1}$ can be given as 
\begin{equation}
\label{eq:p_H1i}
P\left( { \mathcal{H} }_{ 1 } \right) = \sum_{i = 1}^{N} P\left( { \mathcal{H} }_{ 1,i } \right). 
\end{equation}
Let ${{\mathcal{H}}_0}$ denote the hypothesis that the channel is idle at the end of the sensing period, {\em i.e.}, $i = 0$. The probability of hypothesis  $\mathcal{H}_0$, {\em i.e.}, $P\left( { \mathcal{H} }_{ 0 } \right)$, can also be obtained from \eqref{eq:H_1ii} by substituting $i = 0$. Then, the probability of $\mathcal{H}_0$ can be written as
\begin{equation}
\label{eq:p_H0i}
P\left( { \mathcal{H} }_{ 0 } \right) =\sum _{ m=0 }^{ N }{ P\left( { \mathcal{H} }_{ 1,i },m \right) \quad \text{with}\quad i = 0 }. 
\end{equation}

\begin{figure*}
\normalsize
\setcounter{MYtempeqncnt}{\value{equation}}
\setcounter{equation}{9}
\begin{equation}
P\left( { \mathcal{H} }_{ 1 }|{ \mathcal{H} }_{ 1,i },m,k \right) =\frac { 1 }{ 2 } \mathrm{erfc}\left( \frac { \eta -L-\left( kL{ \gamma  }_{ \mathrm{p} }+\sum _{ j=1 }^{ i -k }{ \left( L-{ a }_{ j } \right) { \gamma  }_{ \mathrm{p} } } +\sum _{ l=1 }^{ m-k }{ { d }_{ l }{ \gamma  }_{ \mathrm{p} } }  \right)  }{ 2\sqrt { 2 } \sqrt { \frac { L }{ 2 } +kL{ \gamma  }_{ \mathrm{p} }+\sum _{ j=1 }^{ i -k }{ \left( L-{ a }_{ j } \right) { \gamma  }_{ \mathrm{p} } } +\sum _{ l=1 }^{ m-k }{ { d }_{ l }{ \gamma  }_{ \mathrm{p} } }  }  }\right).
\label{eq:pd_imk}
\end{equation}
\setcounter{equation}{\value{MYtempeqncnt}}
\hrulefill 
\end{figure*}
\begin{figure*}
\normalsize
\setcounter{MYtempeqncnt}{\value{equation}}
\setcounter{equation}{17}
\begin{equation}
\begin{split}
P\left( { \mathcal{H} }_{ 1,i },m,k \right) =M{ p }_{ Bm }\Bigg(\sum_{g_{1} = L}^{S}\dotsc\sum_{g_{k} = L}^{S}\sum_{a_1 = 0}^{L-1}\dotsc\sum_{a_{i-k} = 0}^{L-1}\sum_{d_1 = 0}^{L-1}\dotsc\sum_{d_{m-k} = 0}^{L-1}\sum_{c_{1}=L}^{S}\dotsc\sum_{c_{N-(i-k)-m}=L}^{S}\prod _{ \phi =1 }^{ k }{ { p }_{ \alpha  }\left( { g }_{ \phi } \right)  }\prod _{ j=1 }^{ i -k }{ { p }_{ \beta  }\left( { a }_{ j } \right)  } \\
\times \prod _{ l=1 }^{ m-k }{ { p }_{ \alpha  }\left( { d }_{ l } \right)  }\prod _{ \varphi=1 }^{ N-\left( i -k \right) -m }{ { p }_{ \beta  }\left( { c }_{ \varphi } \right)}\Bigg).
\end{split}
\label{eq:p_h1imk}
\end{equation}
\setcounter{equation}{\value{MYtempeqncnt}}
\hrulefill
\end{figure*}
\begin{figure*}
\normalsize
\setcounter{MYtempeqncnt}{\value{equation}}
\setcounter{equation}{18}
\begin{equation}
C\left( { \mathcal{H} }_{ 1,i },m,k \right) =\log _{ 2 }{ \left( 1+\frac { { \gamma  }_{ \mathrm{s} } }{ 1+\left(\sum _{ \phi=1 }^{ k }{ \frac { { g }_{ \phi }-L }{ S-L } { \gamma  }_{ \mathrm{p} }+\left( i -k \right) { \gamma  }_{ \mathrm{p} }+\sum _{ \varphi=1 }^{ N-\left( i -k \right) -m }{ \frac { S-{ c }_{ \varphi } }{ S-L }  } { \gamma  }_{ \mathrm{p} } } \right) }  \right)  }.
\label{eq:cap}
\end{equation} 
\setcounter{equation}{\value{MYtempeqncnt}}
\hrulefill
\end{figure*}

For energy detection, when there is one PU ($N = 1$), the generalized expression of the probability of detection and the probability of false alarm for AWGN channel under any hypothesis can be given as\cite[(8) and (10)]{chen1}\footnote{Basically, \eqref{eq:det} denotes the probability that SU senses the presence of PU at the end of the sensing period. When PU is present, the term $n\gamma_{\mathrm{p}}$ is non-zero, and \eqref{eq:det} represents the probability of detection. When PU is absent, $\gamma_{\mathrm{p}}$ in \eqref{eq:det} becomes zero, making $n\gamma_{\mathrm{p}} = 0$. In this case, \eqref{eq:det} represents the probability of false alarm.}
\begin{equation}
\frac { 1 }{ 2 } \mathrm{erfc}\left( \frac { \eta -L-n{ \gamma  }_\mathrm{p} }{ 2\sqrt { 2 } \sqrt { \frac { L }{ 2 } +n{ \gamma  }_\mathrm{p} }  }  \right),
\label{eq:det}
\end{equation}  
where $\eta$ is the detection threshold and $n$ represents the number of samples for which PU occupies the channel in the sensing period. Thus, the term $n\gamma_{\mathrm{p}}$ corresponds to the energy received from PU's transmission. $\mathrm{erfc}(\cdot)$ is the complementary error function given by $\mathrm{erfc}(x) = (2/\pi)\int_{x}^{\infty}\exp(-t^2)\mathrm{d}t$. Then, under the hypothesis ($\mathcal{H}_{1,i}, m, k$) for multiple PUs, the energy received from $k$ PUs that remain busy throughout the sensing period of $L$ samples will correspond to $kL\gamma_{\mathrm{p}}$; the energy received from $i-k$ arriving PUs during the sensing period will correspond to $\sum_{j=1}^{i-k}{\left(L-{a}_{j} \right){\gamma}_{\mathrm{p}}}$; and the energy received from $m-k$ departing PUs during the sensing period will correspond to $\sum _{ l=1 }^{ m-k }{ { d }_{ l }{ \gamma  }_{ \mathrm{p} } } $. Then, in the case of multiple PUs, we can generalize \eqref{eq:det} further to obtain the expression for the conditional probability of detection $P\left( { \mathcal{H} }_{ 1 }|{ \mathcal{H} }_{ 1,i },m,k \right)$ given by \eqref{eq:pd_imk}, under the hypothesis ($\mathcal{H}_{1,i}, m, k$) with $i \neq 0$ PUs busy at the end of the sensing period. \addtocounter{equation}{1}

The unconditional probability of detection can be derived by averaging the conditional probability of detection over probabilities of respective hypotheses as 
\begin{align}
P_{\mathrm{ d} }&=\frac { 1 }{ P\left( { \mathcal{H} }_{ 1 } \right)  } \sum _{ i=1 }^{ N } \sum _{ m=0 }^{ N } \sum _{ k=\max\left( 0,m+i -N \right)  }^{ \min\left( m, i  \right)  }P\left( { \mathcal{H} }_{ 1,i },m,k \right)     \nonumber \\ 
&\times P\left( { \mathcal{H} }_{ 1 }|{ \mathcal{H} }_{ 1,i },m,k \right).
\label{eq:pd}
\end{align}
The probability of false alarm is the probability of falsely detecting the presence of at least one PU at the end of the sensing period when, in actual, no PU is present, {\em i.e.}, $i = 0$. Then, the unconditional probability of false alarm is same as \eqref{eq:pd}, but with $i = 0$, and can be given as
\begin{align}
P_{ \mathrm{f} }&=\frac { 1 }{ P\left( { \mathcal{H} }_{ 0 } \right)  }\Bigg(\sum _{ m=0 }^{ N }  {\sum _{ k=\max\left( 0,m+i -N \right)  }^{ \min\left( m, i  \right)  }}P\left( { \mathcal{H} }_{ 1,i },m,k \right) \nonumber \\
&  \times { P\left( { \mathcal{H} }_{ 1 }|{ \mathcal{H} }_{ 1,i },m,k \right)  }  
  \Bigg), 
 \quad \text{for}\hspace*{2mm} i = 0.
 \label{eq:pf1}
\end{align}
After sensing, if SU finds the channel idle, it may begin the transmission. As each PU changes its state only during the sensing period and maintains the same state in the transmission period as it was at the end of the sensing period, the channel capacity under the hypothesis $\mathcal{H}_{1,i}$ when $i$ PUs are occupying the channel at the end of the sensing period can be given by
\begin{equation}
C\left( { \mathcal{H} }_{ 1,i } \right) =\log _{ 2 }{ \left( 1+\frac { { \gamma  }_{ \mathrm{s} } }{ 1+i { \gamma  }_{ \mathrm{p} } }  \right)  },
\label{eq:cap11}
\end{equation}
where $\gamma_{\mathrm{s}}$ is SNR of the secondary transmission. The capacity under the hypothesis $\mathcal{H}_{0}$ when no PU is present at the end of the sensing period ($i = 0$) is given by
\begin{equation}
C\left( { \mathcal{H} }_{ 0 } \right) =\log _{ 2 }{ \left( 1+{ \gamma  }_{ \mathrm{s} } \right)  }. 
\end{equation} 

Under the hypothesis $ { \mathcal{H} }_{1}$, when at least one PU is present at the end of the sensing period, {\em i.e.}, $i \neq 0$, we can obtain the average SU throughput by averaging $C\left( { \mathcal{H} }_{ 1,i } \right)$ in \eqref{eq:cap11} over the probability of occurrence, and is given as
\begin{equation}
R({\mathcal{H}_{1}}) = \left(1-{P}_{\mathrm{d}}\right) \frac{{T}_{\mathrm{f}}-{T}_{\mathrm{s}}}{{T}_{\mathrm{f}}}\sum_{i = 1}^{N} P\left( {\mathcal{H}}_{1,i}\right) C\left({\mathcal{H}}_{1,i}\right),
\label{eq:cap111}
\end{equation} 
where $T_{\mathrm{f}}$ is the frame period. Note that SU achieves $R({\mathcal{H}_{1}})$ only in the case of miss-detection, {\em i.e.}, when it fails to detect the presence of PU at the end of the sensing period. Similarly, under the hypothesis $ { \mathcal{H} }_{0}$, the average SU throughput can be given as
\begin{equation}
R\left({\mathcal{H}}_{0}\right) =\left(1-{P}_{\mathrm{f}}\right) \frac {{T}_{ \mathrm{f}}-{T}_{\mathrm{s}}}{{T}_{\mathrm{f}}} P\left({\mathcal{H}}_{0}\right) C\left({\mathcal{H}}_{0}\right). 
\end{equation}
Then the average achievable SU throughput is given by
\begin{equation}
R={ R\left({\mathcal{H}}_{1}\right)} + R\left({\mathcal{H}}_{0}\right). 
\end{equation}

\subsection{Status Change Anytime During Frame}
We now consider a generalized case, where multiple PUs change their status anytime during the frame. In this case, we can write the probability $P\left( { \mathcal{H} }_{ 1,i},m,k \right)$ that $i$ PUs are busy at the end of the sensing period when $m$ PUs are present at the start of the sensing period, {\em i.e.}, at the start of the frame, and $k$ out of $m$ PUs remain busy throughout the sensing period, by \eqref{eq:p_h1imk}. The equation \eqref{eq:p_h1imk} is derived following the steps used in deriving \eqref{eq:p_h1imk1}. In \eqref{eq:p_h1imk}, $g_{\phi}$, $c_{\varphi}$ $= L, L+1, \dotsc, S$, where $S$ is the number of samples in a frame. All other notations have same meanings as they have in Section\,\ref{sec:1}. The terms ${ p }_{ \alpha  }\left( { g }_{ \phi } \right) $ and ${ p }_{ \beta  }\left( { c }_{ \varphi } \right) $ denote probabilities of transition from busy to idle after ${ g }_{ \phi } $th sample and idle to busy after ${ c}_{ \varphi }$th sample, respectively, during the transmission period. $g_{\phi} = S$ and $c_{\varphi} = S$ denote that PU remains busy and idle, respectively, throughout the transmission period. Then,  ${ p }_{ \alpha  }\left( { g }_{ \phi } \right) $ and ${ p }_{ \beta  }\left( { c }_{ \varphi } \right) $ become $1-{ F }_{ \alpha  }\left( { T }_{ \mathrm{f}} \right) $  and  $1-{ F }_{ \beta  }\left( { T }_{ \mathrm{f} } \right) $, respectively.

We can obtain the probability $P\left( { \mathcal{H} }_{ 1,i},m \right)$ by substituting the value of $P\left( { \mathcal{H} }_{ 1,i },m,k \right) $ from \eqref{eq:p_h1imk} in \eqref{eq:p_H1im1}. 
Then, $P\left( { \mathcal{H} }_{ 1,i } \right)$, $P\left( { \mathcal{H} }_{ 1 } \right)$, and $P\left( { \mathcal{H} }_{ 0} \right)$ can be found using \eqref{eq:H_1ii}, \eqref{eq:p_H1i}, and \eqref{eq:p_H0i}, respectively, as given in Section\,\ref{sec:1}. The probability of detection $P_{\mathrm{d}}$ and the probability of false alarm $P_{\mathrm{f}}$ can be obtained by substituting \eqref{eq:p_h1imk} in \eqref{eq:pd} and \eqref{eq:pf1}, respectively.

The channel capacity under the hypothesis (${ \mathcal{H} }_{ 1,i },m,k$) can be given by \eqref{eq:cap}, where the term $\sum _{ \phi=1 }^{ k } \frac { { g }_{ \phi }-L }{ S-L } { \gamma  }_{ \mathrm{p} }$ corresponds to the departures of PUs in the transmission period who were present throughout the sensing period;  the term $\left( i -k \right) { \gamma  }_{ \mathrm{p} }$ corresponds to PUs who were present at the end of the sensing period as well as remain present throughout the transmission period; the term $\sum_{ \varphi=1 }^{ N-\left( i -k \right) -m }{ \frac { S-{ c }_{ \varphi } }{ S-L }  } { \gamma  }_{ \mathrm{p} } $ corresponds to the arrivals of PUs in the transmission period. \addtocounter{equation}{2}Accordingly, the average SU throughput under the hypothesis $ { \mathcal{H} }_{ 1,i }$ can be given by
\begin{equation}
R\left( { \mathcal{H} }_{ 1,i } \right) = \sum _{ m = 0 }^{ N }{ \sum _{ k = \max(0, m+i -N )}^{ \min\left( m,i  \right)  }{ R\left( { \mathcal{H} }_{ 1,i },m,k \right)  }  },
\end{equation}
where $R\left( { \mathcal{H} }_{ 1,i },m,k \right) $ is given as
\begin{align}
R\left( { \mathcal{H} }_{ 1, i }, m, k \right) &=\left( 1-{ P }_{ \mathrm{d} } \right) \frac { { T }_{\mathrm{f} }-{ T }_{\mathrm{s} } }{ { T }_{ \mathrm{f} } } P\left( { \mathcal{H} }_{ 1, i }, m, k \right) \nonumber\\
&\times C\left( { \mathcal{H} }_{ 1, i }, m, k \right).
\end{align}
Then the achievable throughput when at least one PU is busy at the end of the sensing period is given by
\begin{equation}
{R}(\mathcal{H}_{1})=\sum _{ i = 1 }^{ N }{ R\left( { \mathcal{H} }_{ 1, i } \right)  }.
\end{equation}
The achievable throughput when there is no PU at the end of the sensing period is
\begin{equation}
{ R }(\mathcal{H}_{0}) = \left( 1-{P}_{\mathrm{f}} \right) \frac { { T }_{ \mathrm{f} }-{ T }_{ \mathrm{s} } }{ { T }_{ \mathrm{f} } } P\left( { \mathcal{H} }_{ 1,i }, m, k \right) C\left( { \mathcal{H} }_{ 1, i }, m, k \right),
\end{equation}
 with $i = 0$. Thus, the average achievable SU throughput becomes
\begin{equation}
R=R(\mathcal{H}_1)+R(\mathcal{H}_0).
\end{equation}

\section{Results and Discussions}
In this section, we evaluate the effect of multiple PUs with random arrivals and departures, assuming \textbf{Case I}:
the status change only during sensing, and \textbf{Case II}: the status change anytime during frame, on the sensing-throughput tradeoff for SU. The frame period is $T_{\mathrm{f}}$ = 30$\text{ms}$ and the sampling interval is $t_\mathrm{s}$ = 100$\mathrm{\mu s}$. The SNR of each PU is $\gamma_{\mathrm{p}}$ = $-5\text{dB}$. The SNR of SU is $\gamma_{\mathrm{s}}$ = $10\text{dB}$. The noise variance is $\sigma^2 = 1$. Using Neyman-Pearson rule\cite{kay}, the detection threshold in ED is found by minimizing the probability of false alarm for the target probability of detection $P_{\mathrm{d}}$ = 0.9. We verify the analysis by simulations. 
  
Fig.\,\ref{fig:edge-b1} compares the average achievable SU throughput for both aforementioned cases of status changes of PUs. We can see from Fig.~\ref{fig:edge-b1} that the SU throughput in Case II is lower than that in Case I. This is because, in Case II, even if SU correctly detects the channel is idle at the end of the sensing period and starts transmitting during transmission period, PUs may arrive in the channel during transmission period causing interference to SU, in turn, reducing its throughput. However, the interference caused due to arrivals of PUs does not occur in Case I. For Case II, though PUs may also depart in the transmission period, this will happen only in the event of miss-detection by SU, whose probability of occurrence is very less compared to arrivals of PUs in the transmission period, due to high target probability of detection. Figs.~\ref{fig:edge-b1} and \ref{fig:edge-a1} also show that, with the increase in the number of PUs, the optimal sensing time reduces as PU SNR received at SU increases, in turn, helping SU's throughput to increase. However, at the same time, increase in the number of PUs reduces the chances of SU finding the channel idle, reducing SU throughput. The latter effect is more pronounced as seen from Figs.~\ref{fig:edge-b1} and \ref{fig:edge-a1}. Also, for Case II, increase in the number of PUs increases the probability of arrivals of PUs in the transmission period of SU, in turn, increasing the probability of interference. This further reduces SU throughput.

As shown in Fig.\,\ref{fig:edge-a1}, decrease in average holding times for busy ($\theta_{\alpha}$) and idle ($\theta_{\beta}$) states increases PU traffic, which leads to decrease in SU throughput.
   \begin{figure}
  \centering
    \subfigure[]{\label{fig:edge-b1}\includegraphics[scale=0.38]{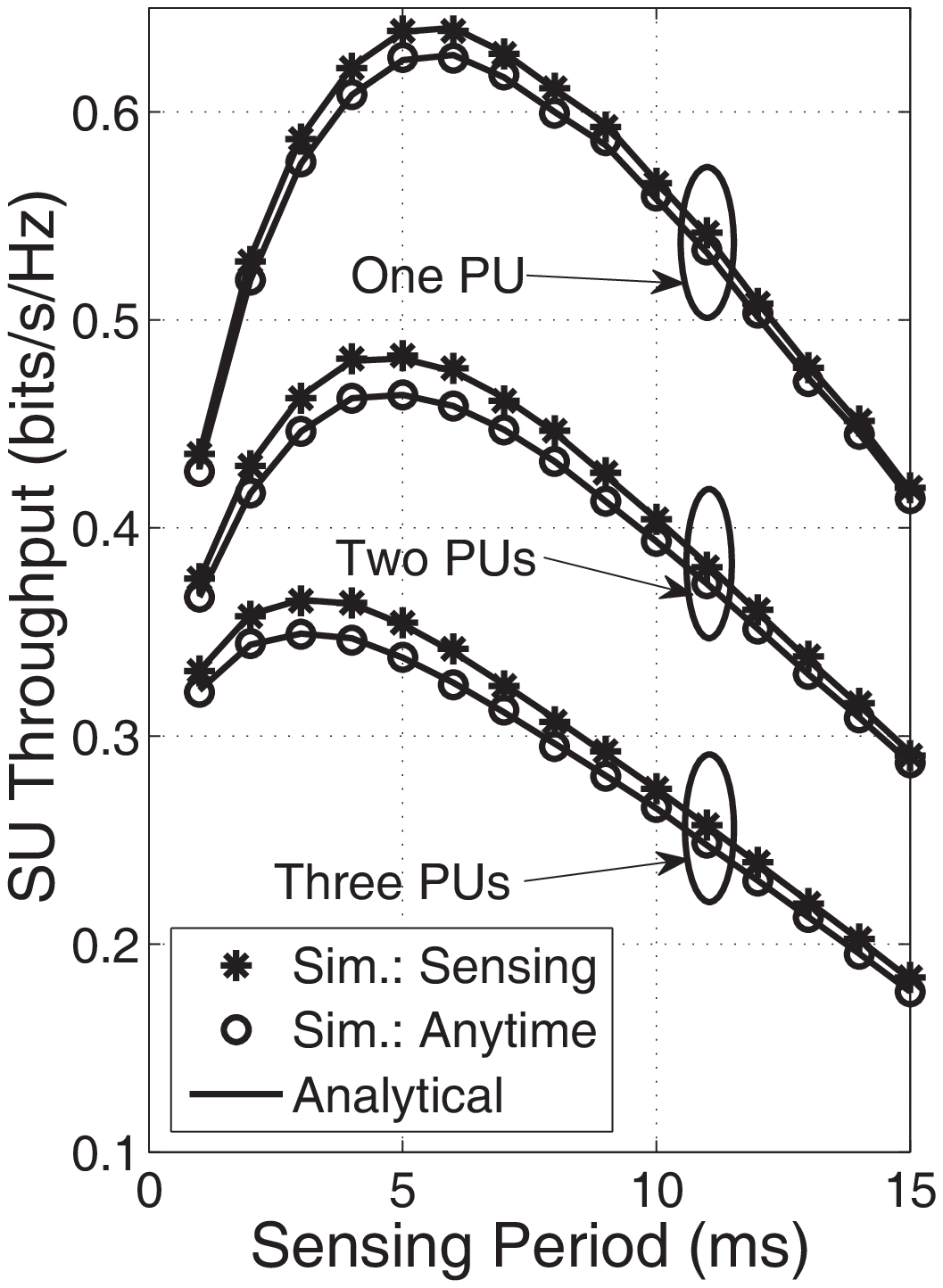}} 
        \subfigure[]{\label{fig:edge-a1}\includegraphics[scale=0.38]{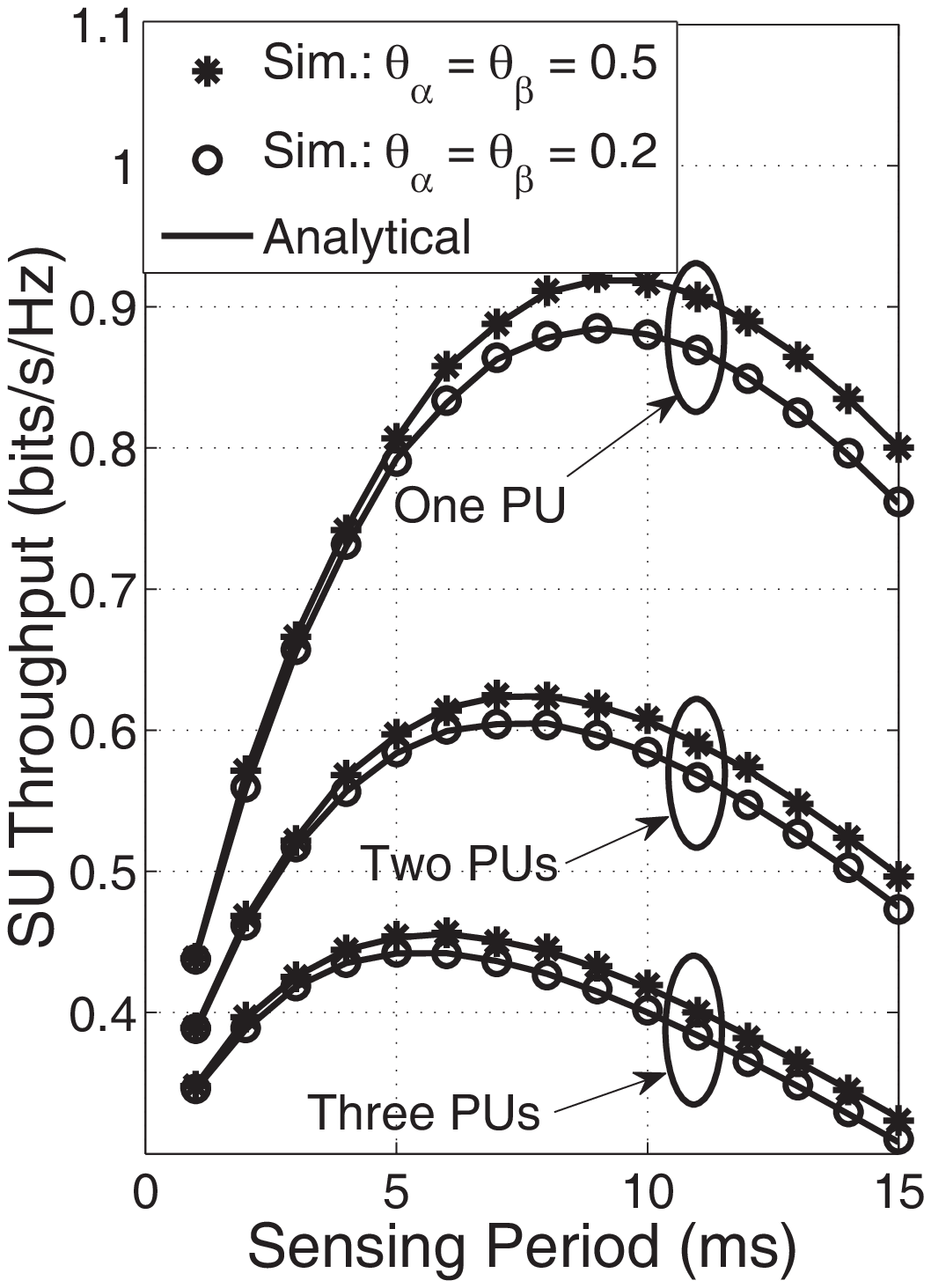}}
      \caption{(a) Sensing-throughput tradeoff assuming status change only during sensing and status change anytime during frame, $\theta_{\alpha}$ = $\theta_{\beta}$ = $0.02$. (b)  Sensing-throughput tradeoff for different PU traffic parameters with status change anytime during frame.}
      \label{fig:ANDrule}
         \end{figure}
\bibliographystyle{ieeetr}
\bibliography{paper}

\end{document}